\definecolor{darkgreen}{rgb}{0.0,0,0.9}
\newtcolorbox{wbox}
{
	colback  = white,
}
\theoremstyle{definition}
\newtheorem{theorem}{Theorem}
\newtheorem{lemma}{Lemma}
\newtheorem{corollary}{Corollary}
\newtheorem{definition}{Definition}
\newtheorem{observation}[theorem]{Observation}
\newtheorem{claim}[theorem]{Claim}
\title{On the Core of the $b$-Matching Game}
 \author[1]{Rohith Reddy Gangam
 }
 \author[1]{Shayan Taherijam}
 \author[1]{Vijay V. Vazirani}
 \affil[1]{University of California, Irvine}
\date{}
\begin{document}
    \maketitle


\begin{abstract}

The core is a quintessential solution concept for profit sharing in cooperative game theory. An imputation allocates the worth of the given game among its agents. The imputation lies in the core of the game if, for each sub-coalition, the amount allocated to its agents is at least the worth of this sub-coalition. Hence, under a core imputation, each of exponentially many sub-coalitions gets satisfied. 

The following computational question has received much attention: Given an imputation, does it lie in the core? Clearly, this question lies in co-NP, since a co-NP certificate for this problem would be a sub-coalition which is not satisfied under the imputation. This question is in P for the assignment game \cite{Shapley1971assignment} and has been shown to be co-NP-hard for several natural games, including max-flow \cite{Fang2002computational} and MST \cite{Faigle1997complexity}. The one natural game for which this question has remained open is the $b$-matching game when the number of times an edge can be matched is unconstrained; in case each edge can be matched at most once, it is co-NP-hard \cite{biro2018stable}.

At the outset, it was not clear which way this open question would resolve: on the one hand, for all but one game, this problem was shown co-NP-hard and on the other hand, proximity to the assignment problem and the deep structural properties of matching could lead to a positive result. In this paper, we show that the problem is indeed co-NP-hard. 

\end{abstract}

    

\section{Introduction}

The core is a quintessential solution concept for profit sharing in cooperative game theory. An imputation allocates the worth of the given game among its agents. The imputation lies in the core of the game if, for each sub-coalition, the amount allocated to its agents is at least the worth of this sub-coalition. Hence, under a core imputation, each of exponentially many sub-coalitions gets satisfied. 

The following computational question has received much attention: Given an imputation, does it lie in the core? Clearly, this question lies in co-NP, since a co-NP certificate for this problem would be a sub-coalition which is not satisfied under the imputation. This question is in P for the assignment game \cite{Shapley1971assignment} and has been shown to be co-NP-hard for several natural games, including max-flow \cite{Fang2002computational} and MST \cite{Faigle1997complexity}. The one natural game for which this question has remained open is the $b$-matching game when the number of times an edge can be matched is unconstrained; in case each edge can be matched at most once, it is co-NP-hard \cite{biro2018stable}.

At the outset, it was not clear which way this open question would resolve: on the one hand, for all but one game, this problem was shown co-NP-hard and on the other hand, proximity to the assignment problem and the deep structural properties of matching could lead to a positive result. In this paper, we show that the problem is indeed co-NP-hard. 

\textbf{Results:}
We show the following results in our paper.
\begin{enumerate}
    \item Deciding if an imputation is in the core of edge-unconstrained $b$-matching game is co-NP-complete.
    \item We extend this hardness result to show that recognizing if an imputation is the leximin 
   (/leximax) in the core is also co-NP-complete. This also shows that the computation of these imputations is also NP-hard.
    \item We complement these hardness results by giving a complete characterization of the core of $b$-matching games on star graphs. This gives a polynomial time algorithm to decide if an imputation is in the core of these games.

\end{enumerate}


    \section{Related Works}

Graph-based cooperative games and the properties of core imputations within them have been extensively studied. The seminal work of \cite{Shapley1971assignment} established that in the assignment game, the core is precisely characterized by the set of optimal dual solutions. A natural extension of the assignment game is the $b$-matching game, where each vertex can be matched multiple times up to a predefined capacity.

There are two versions of the $b$-matching game. \cite{b_matching_nucleolus} distinguish them as simple and non-simple $b$-matching games. The simple $b$-matching game is the edge-constrained version where each edge is to be matched at most once. \cite{biro2018stable} demonstrated that verifying whether an imputation belongs to the core in this setting is co-NP-complete. Our work focuses on the non-simple $b$-matching game, the edge-unconstrained version, where edges can be used multiple times. These games are often referred to as \textit{transportation games} (see \cite{Transportation_games}), and we establish a similar hardness result in this setting. Notably, \cite{sotomayor1992multiple}) proved that the core of transportation games is always non-empty.

Beyond $b$-matching games, another well-studied class of cooperative games is the \textit{minimum-cost spanning tree (MST) games}. \cite{GranotHuberman1981} introduced MST games and showed that their core is always non-empty. Several methods have been developed to compute core imputations, including Bird’s rule(\cite{Bird1976}) and fairness-based approaches (\cite{Kar}, \cite{feltkamp1994irreducible}, \cite{bergantinos2007fair}, \cite{trudeau2012new}). However, verifying whether an imputation belongs to the core of an MST game is co-NP-hard (\cite{Faigle1997complexity}).

A closely related class of cooperative games is the max-flow game, introduced by \cite{Kalai1982totally}, where the core is also always non-empty. Yet, similar to MST games, testing whether an imputation belongs to the core is co-NP-hard (\cite{Fang2002computational}).

Transportation games share structural properties with these games. \cite{Transportation_games} demonstrated that any optimal dual solution of the linear programming (LP) formulation provides a core imputation. However, unlike in the assignment game, these dual-based imputations, collectively called the \textit{Owen set}, do not fully characterize the core of $b$-matching games (\cite{vazirani2023lpduality}).

The Owen set, first introduced by \cite{Owen1975} for linear production games, was later formalized in \cite{Owen.Characterization}. It consists of core imputations derived from dual solutions, where each dual variable represents an agent's shadow price. More recently, \cite{ggsv2024equitablefaircore} extended the Owen set framework to MST and max-flow games, noting that in these settings, dual variables do not directly correspond to shadow prices, requiring careful interpretation of dual-optimal solutions.

Finally, fairness concepts such as minimax, maximin, leximin, and leximax have been widely explored in cooperative game theory(\cite{minimax_group_fairness}). \cite{Vazirani-leximin} developed an efficient method for computing leximin/leximax fair core imputations for the assignment game. Later, \cite{ggsv2024equitablefaircore} established that computing such imputations for MST and max-flow games is NP-hard, although these imputations can be efficiently computed within the Owen set.



\section{Preliminaries}

\begin{definition}
    Let $N$ be a set of agents. A cooperative game on $N$ is defined by a {\em characteristic function} $\nu: 2^N \rightarrow \mathbb{R}_+$, where for each $S \subseteq N$, $\nu(S)$ is the value that the sub-coalition $S$ can produce on its own. $N$ is also called the grand coalition. We will use $(N,\nu)$ to define a game.
\end{definition}

$\nu(N)$ is also called \textit{value/worth} of the game.  


\begin{definition}
    A \textit{profit share} is $p: N \to \mathbb{R}_+$ is an assignment of profits to the agents.
\end{definition}
An imputation is a kind of profit share that distributes the exact worth of the game among the agents, i.e.,
\begin{definition}
An \textit{imputation} is a profit share $p$ such that $\sum_{i \in N} p(i)=\nu(N)$.
\end{definition}

We extend the definition of $p$ to a set of agents, say $S$, to represent the total profit received by all agents in $S$, i.e., $p(S)=\sum_{i\in S}p(i)$.

\begin{definition}\label{defCore}
An imputation $p$ is in the {\em core} of a profit-sharing game $(N,\nu)$ if and only if for every possible sub coalition of agents $S \subseteq N$, the total profit shares of the members of $S$ is at least as large as the worth they can generate by themselves i.e., $ p(S) \geq \nu(S)$.
\end{definition}

\begin{definition}
Let $P$ be a set of imputations of a game $(N,\nu)$ and $p_1,p_2 \in P$. Let $l_1,l_2$ be the lists formed by arranging the shares of agents in $p_1,p_2$ in ascending order. $l_1$ is {\em lexicographically larger} than $l_2$ if $l_1$ has the larger value at the first index where the two lists differ. The imputation in $P$ which is lexicographically larger than all other imputations in $P$ is the {\em lexicographically minimum} or {\em leximin} imputation in $P$.  
\end{definition}

\begin{definition}
Let $P$ be a set of imputations of a game $(N,\nu)$ and $p_1,p_2 \in P$. Let $l_1,l_2$ be the lists formed by arranging the shares of agents in $p_1,p_2$ in descending order. $l_1$ is {\em lexicographically smaller} than $l_2$ if $l_1$ has the smaller value at the first index where the two lists differ. The imputation in $P$ which is lexicographically smaller than all other imputations in $P$ is the {\em lexicographically maximum} or {\em leximax} imputation in $P$.  
\end{definition}

Let $G = (U,V,E)$ be a weighted bipartite graph with edge weights $w: E \rightarrow \mathbb{R}_+$. Let $b: U\cup V \rightarrow \mathbb{Z}_+$, a vertex capacity function, specify the number of times a vertex can be matched. Any choice of edges, with multiplicity, subject to vertex capacity function, $b$, is called a $b$-matching. 

A \textit{maximum weight bipartite $b$-matching game}, \textit{$b$-matching game} for short, is a game on an instance of bipartite $b$-matching graph with the vertices as agents and the worth of a sub coalition of agents $S\subseteq U\cup V$, denoted by $\nu(S)$, is the weight of a maximum weight $b$-matching, max. wt. $b$-matching for short, in the graph G restricted to vertices in $S$ only. Where obvious, we will use $G$ to mean the set of all agents and $p(G)$ and $\nu(G)$ to mean the total profit of all agents and the worth of the game respectively.


\section{Star Graphs}

In this section, we consider games on \textit{star graphs}, i.e., bipartite graphs where one of the partitions is a singleton set. We will start with the complete characterization of imputations in the core of these games.\\
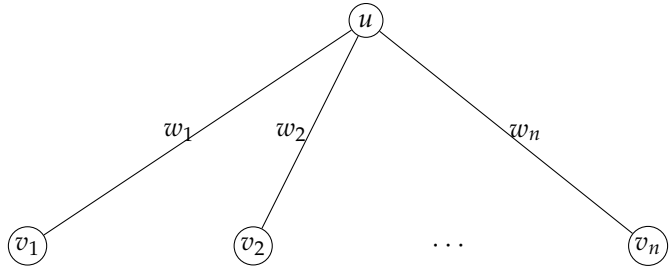
\begin{figure}[H]
    \centering
    \begin{tikzpicture}[scale=1.5, every node/.style={inner sep=1pt}]
        \small
        \node[draw, circle, minimum size=0.45cm] (u) at (0, 2) {$u$};
        \node[draw, circle, minimum size=0.45cm] (v1) at (-3, 0) {$v_1$};
        \node[draw, circle, minimum size=0.45cm] (v2) at (-1, 0) {$v_2$};
        \node[draw, circle, minimum size=0.45cm] (vn) at (2.5, 0) {$v_n$};
        
        \draw (u) -- (v1) node[midway, left] {$w_1$};
        \draw (u) -- (v2) node[midway, left] {$w_2$};
        \draw (u) -- (vn) node[midway, right] {$w_n$};
      
        \path (v2) -- (vn) node[midway, draw=none] {$\cdots$};
    \end{tikzpicture}
    \caption{A star graph $G=(U,V,E)$.}
    \label{fig:general_stars}
\end{figure}
Let us consider a $b$-matching game on a star graph $G=(U,V,E)$ with $U=\{u\}, V=\{v_1,v_2,\ldots, $ $v_n\},$ $E = \{e_i| e_i=(u,v_i), \forall v_i\in V\}$. Let the weight on edges and the capacities on vertices be given by $w:E\rightarrow \mathbb{R}_+$ and $b:U\cup V \rightarrow \mathbb{Z}_+$. We will refer to the agent $u$ as the \textit{central} player and the agents in $V$ as \textit{leaf} players. For ease of notation, let us represent the profit of the central agent by $p_u$ and the leaf agents by $p_i = p_{v_i}$.

\subsection{Efficient Characterization of the Core}

Before we give a characterization, we will first revisit the notion of \textit{marginal utility} of an agent.
\begin{definition}
    In a game $G=(N,\nu)$, the \textit{marginal untility} of an agent $i \in N$ is the decrease in the total worth of the game with the exclusion of $i$.
\end{definition}

We will use $\mu^G(i)$, or $\mu(i)$ if the game is obvious, to represent the marginal utility of agent $i$. Formally, $$ \mu(i) = \nu(G) - \nu(G\setminus \{i\}) $$

\begin{theorem}
\label{thm:star_core_characterization}
    The core of a $b$-matching game on star graphs is completely characterizable. 
\end{theorem}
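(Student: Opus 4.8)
The plan is to show that on a star the seemingly exponential system of core inequalities collapses to a linear number of box constraints, one per leaf. First I would dispose of the trivial coalitions: since every edge is incident to the central vertex $u$, any coalition $S$ with $u \notin S$ has $\nu(S)=0$, so $p(S) \ge \nu(S)$ holds automatically because $p \ge 0$. Hence only coalitions of the form $S = \{u\} \cup T$ with $T \subseteq V$ impose real constraints. I would then rewrite these by complementation: setting $R = V \setminus T$ and using efficiency $p_u + p(V) = \nu(G)$, the constraint $p_u + p(T) \ge \nu(\{u\}\cup T)$ is equivalent to $p(R) \le \nu(G) - \nu(G \setminus R)$. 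Extending the marginal-utility notation of the preceding definition to sets, $\mu(R) := \nu(G) - \nu(G\setminus R)$, this shows the core equals $\{\,p \ge 0 : p_u + p(V) = \nu(G),\ p(R) \le \mu(R)\ \forall R \subseteq V\,\}$.

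To evaluate $\mu$ I would use that a maximum-weight $b$-matching on a star is computed greedily: after sorting the leaves by decreasing weight, saturate each leaf's capacity $b(v_i)$ in turn until the central budget $b(u)$ is exhausted, with at most one "critical" leaf partially used at the boundary. In particular, deleting a leaf the optimal matching never uses leaves $\nu$ unchanged, so $\mu(v_i)=0$ for such leaves and the constraint $p_i \le \mu(v_i)$ forces them to receive nothing, as one expects of unproductive agents.

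The technical heart, and the step I expect to be the main obstacle, is to prove that $\mu$ is supermodular, equivalently that $T \mapsto \nu(\{u\}\cup T)$ is monotone and submodular with value $0$ at $\emptyset$. Granting this, supermodularity together with $\mu(\emptyset)=0$ yields superadditivity $\mu(R) \ge \sum_{i \in R}\mu(v_i)$, so the single-leaf bounds $p_i \le \mu(v_i)$ already imply every set constraint $p(R) = \sum_{i\in R} p_i \le \sum_{i \in R}\mu(v_i) \le \mu(R)$; these bounds are also necessary, being the case $R = \{v_i\}$. I would establish submodularity by a greedy exchange argument: the marginal gain of adding a leaf to $T$ is non-increasing as $T$ grows, because a larger $T$ both leaves less free budget and forces any newly used capacity to displace capacity of higher weight. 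The delicate point is the single partially saturated critical leaf, for which the refilled value must be tracked as a function of the freed budget; the concavity (hence subadditivity) of this refill curve is exactly what drives the inequality.

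Finally I would observe that $p_u \ge 0$ need not be imposed separately: summing the leaf bounds and using $\sum_i \mu(v_i) \le \mu(V) = \nu(G)$ gives $p(V) \le \nu(G)$, so $p_u = \nu(G) - p(V) \ge 0$ automatically. Thus the core equals $\{\,p : p_i \ge 0\ \forall i,\ p_u = \nu(G) - \sum_i p_i,\ p_i \le \mu(v_i)\ \forall i\,\}$, a system of $O(n)$ linear inequalities whose coefficients $\mu(v_i)$ are computable in polynomial time by the greedy rule. This is the claimed complete characterization, and it immediately yields a polynomial-time test for core membership on star graphs.
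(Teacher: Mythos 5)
Your proposal is correct and matches the paper's proof in all essentials: both characterize the core by the per-leaf constraints $p_i \le \mu(v_i)$, and both reduce the exponentially many coalition constraints to these via submodularity (diminishing marginal returns) of the star value function on coalitions containing $u$, established by the same greedy exchange argument (the paper's Claim~\ref{cl:marginal_utility}). The only cosmetic difference is that you telescope via complementation and superadditivity of $\mu$ on subsets of $V$, whereas the paper telescopes by adding vertices one at a time to the inequality $\nu(S\cup\{v_i\}) - p(S\cup\{v_i\}) \ge \nu(S) - p(S)$.
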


\begin{proof}
Consider a $b$-matching game on star graph $G$. The core imputations relate to marginal utilities through the following lemma. 
\begin{lemma}
\label{lem:core_marginal_utility}
    An imputation $p:U\cup V \rightarrow \mathbb{R}_+$ is in the core if and only if no leaf player is paid more than its marginal utility, i.e., 
    $$p_i \leq \nu(G) - \nu(G \setminus \{v_i\})$$
\end{lemma}

\begin{proof}
    It is easy to see that the condition is necessary; For contradiction, assume that some leaf player, say $v_i$, gets paid more than its marginal utility, i.e., $p_i > \nu(G) - \nu(G\setminus \{v_i\}) \iff \nu(G) - p_i < \nu(G\setminus \{v_i\})$. Then, by  a simple substitution as shown below, we can see that the sub-coalition of the rest of the vertices, $(U\cup V) \setminus{\{v_i\}}$, will not get enough profit, and so, the imputation is not in the core.  $$p(G\setminus \{v_i\}) = p(G) - p_i = \nu(G) - p_i < \nu(G\setminus \{v_i\})$$

    To show that the condition is also sufficient, we will first discuss a more general property for the star graphs.

    Define $b$-matching ``sub-games'' on star sub-coalitions $S$ of agents, where $S\subseteq U\cup V, u\in S$. We use $S$ to refer to these sub-games, with the characteristic function given by the max. wt. $b$-matchings within this set of agents. Then, we claim the following.

    \begin{claim}
    \label{cl:marginal_utility}
        In star graphs, the marginal utility of a leaf agent is higher in the sub-games than in the original $b$-matching game.\\
        Formally, let $u,v_i \in S$. Then 
    $\mu^S_i \geq \mu^G_i$ or $$ \nu(S) - \nu(S\setminus \{v_i\}) \geq \nu(G) - \nu(G \setminus \{v_i\}) $$
    \end{claim}
    We first argue that the claim is enough to show that the condition is sufficient: Combining $p_i \leq \nu(G)-\nu(G\setminus \{v_i\})$ and the claim, we get $\forall S: p_i \leq \nu(S\cup \{v_i\}) - \nu(S)$. 
    Adding $p(S)$ on both sides to this inequality and rearranging, we have $\forall S: \nu(S\cup \{v_i\}) - p(S\cup \{v_i\}) \geq \nu(S)-p(S)$. Iteratively adding vertices in $G\setminus S$, this inequality gives $\nu(G)-p(G) \geq \nu(S)-p(S)$. Since $p$ is an imputation, we have that $\nu(G)=p(G)$ and so
    
    $$ \forall S, \quad p(S)\geq \nu(S)$$ 

\end{proof}

\end{proof}

Below, we prove the claim on marginal utilities.
\begin{proof}[Proof of~\Cref{cl:marginal_utility}]
    To show this, we show that the marginal utility decreases in the presence of every additional agent, i.e., $$\forall S \subseteq U \cup V, \quad v, v' \notin S: \quad \nu(S \cup \{v\}) - \nu(S) \geq \nu(S \cup \{v, v'\}) - \nu(S \cup \{v'\})$$By adding elements of $G$ not is $S$ in sequence, this identity proves the claim.

   \begin{figure}[H]
    \centering
    \begin{minipage}{0.48\textwidth}
        \begin{tikzpicture}[scale=1, every node/.style={inner sep=1pt}]
            \scriptsize
            \node[draw, circle, minimum size=0.45cm] (u) at (-1, 2) {$u$};
            \node[draw, circle, minimum size=0.45cm] (v1) at (-3, 0) {$v_1$};
            \node[draw, circle, minimum size=0.45cm] (v2) at (-2, 0) {$v_2$};
            \node[draw, circle, minimum size=0.45cm] (vk) at (0, 0) {$v_k$};
            \node[draw, circle, minimum size=0.45cm] (v) at (3, 0) {$v$};

            \draw (u) -- (v1) node[midway, left] {$w_1$};
            \draw (u) -- (v2) node[midway, left] {$w_2$};
            \draw (u) -- (vk) node[midway, right] {$w_k$};
            \draw (u) -- (v) node[midway, right] {$w$};

            \path (v2) -- (vk) node[midway, draw=none] {$\cdots$};

           \draw[blue, thick, rounded corners=25pt] 
            (-4, -0.5) -- (1, -0.5) -- (-1, 2.8) -- cycle;
            \node[blue] at (-1.5, -1) {$S$};
        \end{tikzpicture}
    \end{minipage}
    \hfill
    \begin{minipage}{0.48\textwidth}
        \begin{tikzpicture}[scale=1, every node/.style={inner sep=1pt}]
            \scriptsize
            \node[draw, circle, minimum size=0.45cm] (u) at (-1, 2) {$u$};
            \node[draw, circle, minimum size=0.45cm] (v1) at (-3, 0) {$v_1$};
            \node[draw, circle, minimum size=0.45cm] (v2) at (-2, 0) {$v_2$};
            \node[draw, circle, minimum size=0.45cm] (vk) at (0, 0) {$v_k$};
            \node[draw, circle, minimum size=0.45cm] (v) at (3, 0) {$v$};
            \node[draw, circle, minimum size=0.45cm] (v') at (1, 0) {$v'$};

            \draw (u) -- (v1) node[midway, left] {$w_1$};
            \draw (u) -- (v2) node[midway, left] {$w_2$};
            \draw (u) -- (vk) node[midway, right] {$w_k$};
            \draw (u) -- (v') node[midway, right] {$w'$};
            \draw (u) -- (v) node[midway, right] {$w$};

            \path (v2) -- (vk) node[midway, draw=none] {$\cdots$};

           \draw[blue, thick, rounded corners=25pt] 
            (-4, -0.5) -- (2, -0.5) -- (-1, 2.8) -- cycle;
             \node[blue] at (-1, -1) {$S\cup\{v'\}$};
        \end{tikzpicture}
    \end{minipage}
    \caption{Graphs used in proof of \Cref{cl:marginal_utility}}
    \label{fig:marginal_utility}
\end{figure}
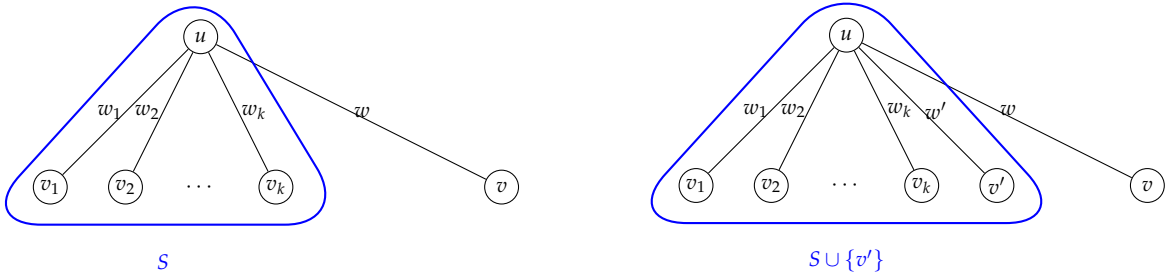

    Consider the graphs in the~\Cref{fig:marginal_utility}. Note that, for star graphs, a max. wt. $b$-matching can just be found by greedily picking the heaviest edges incident on $u$ up to a capacity of $b_{v_i}$ for each $v_i$. 
    
    Let the \textit{weights} of a max. wt. $b$-matching on $S$ be $e_1\geq e_2\geq \ldots \geq e_l$ and that on $S\cup \{v'\}$ be $f_1\geq f_2\geq \ldots \geq f_m$, where $m,l \leq b_u$. Since every edge available in $S$ is also available in $S\cup \{v'\}$, we have $l\leq m$. Let us extend the sequence $e_1\geq e_2\geq \ldots $ up to $m$ terms by adding zeros at the end. As the algorithm is greedy, we also have that $$\forall i\in \{1,2,\ldots,m\} , \quad e_i\leq f_i$$Now bring the vertex $v$ into the graph $S\cup \{v'\}$. Let it replace edges $f_{i_1} \geq f_{i_2} \geq \ldots \geq f_{i_t}$ with the edge $(u,v)$ of weight $w$ to get the max. wt. $b$-matching of $S\cup \{v,v'\}$. Then, replacing the corresponding edges $e_{i_1} \geq e_{i_2} \geq \ldots \geq  e_{i_t}$ with the edge $(u,v)$ will give us a valid $b$-matching in the graph $S\cup \{v\}$. And so, $$\nu(S\cup \{v,v'\}) - \nu(S\cup \{v'\}) = \sum_{j=i}^{t}(w-f_{i_j}) \leq \sum_{j=i}^{t}(w-e_{i_j}) \leq \nu(S\cup \{v\}) - \nu(S)$$ The first inequality comes from $e_i\leq f_i$ and the second inequality is true as the matching obtained by replacing edges $e_{i_1}, e_{i_2},\ldots, e_{i_t}$ with the edge $(u,v)$ need not be the max. wt. $b$-matching in $S\cup \{v\}$. This completes the claim.

 \end{proof}

\begin{corollary}
\label{cr:imputation_star}
There exists a polynomial time algorithm to decide if an imputation is in the core of $b$-matching games on star graphs.
\end{corollary}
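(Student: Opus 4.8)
The plan is to convert the characterization of \Cref{lem:core_marginal_utility} into an explicit verification procedure and argue that each step runs in polynomial time. Since the problem hands us an imputation $p$ as input, the conditions $p(i)\ge 0$ and $\sum_{i} p(i)=\nu(G)$ already hold by definition of an imputation, so by the lemma, deciding core membership reduces to checking the $n$ inequalities $p_i \le \nu(G)-\nu(G\setminus\{v_i\})$, one for each leaf player $v_i$. Thus it suffices to show that $\nu(G)$ and each of the $n$ values $\nu(G\setminus\{v_i\})$ can be computed in polynomial time, after which the decision is just $n$ arithmetic comparisons.

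The key computational ingredient is the greedy rule already exploited in the proof of \Cref{cl:marginal_utility}: on a star the only edges are $e_i=(u,v_i)$, each usable up to $b_{v_i}$ times (since $v_i$ has $u$ as its sole neighbor), subject to the total multiplicity being at most $b_u$. Maximizing weight is therefore a knapsack-style selection solved exactly by sorting the leaves in decreasing order of $w_i$ and filling the central capacity $b_u$ greedily, taking $\min(b_{v_i},\ \text{remaining capacity})$ copies of the heaviest remaining edge and proceeding down the list. Sorting costs $O(n\log n)$ and a single cumulative-sum pass then yields $\nu(G)$. The point that genuinely needs care is that this must be polynomial in the \emph{input size} rather than in $\sum_v b_v$: the greedy never enumerates individual edge copies but only records multiplicities, so it remains polynomial even when the capacities are encoded in binary. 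This is the only subtlety, and it is the main thing I would verify explicitly; there is otherwise no deep obstacle, as the heavy lifting was done in the lemma and the claim.

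Assembling the algorithm, I would compute $\nu(G)$ once, then for each leaf $v_i$ run the same greedy on the star with edge $e_i$ deleted to obtain $\nu(G\setminus\{v_i\})$, and test $p_i \le \nu(G)-\nu(G\setminus\{v_i\})$. If every test passes, report that $p$ lies in the core; otherwise report that it does not, optionally returning the violated sub-coalition $(U\cup V)\setminus\{v_i\}$ as a certificate, exactly as in the necessity direction of \Cref{lem:core_marginal_utility}. Running $n+1$ greedy computations gives an $O(n^2\log n)$ bound, which one can sharpen by noting that deleting a single edge only reshuffles the greedy fill past the position of $e_i$, so the marginal utilities can in fact be read off from one sorted pass.
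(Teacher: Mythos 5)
Your proposal is correct and follows essentially the same route as the paper: both reduce core membership to the $n$ marginal-utility checks of \Cref{lem:core_marginal_utility} and then compute $\nu(G)$ and each $\nu(G\setminus\{v_i\})$ via max.\ wt.\ $b$-matchings. The only difference is that the paper invokes Schrijver's general strongly polynomial algorithm for these computations, whereas you make the argument self-contained by spelling out the star-specific greedy (already used in the proof of \Cref{cl:marginal_utility}) and correctly flagging that it runs in time polynomial in the input size even with binary-encoded capacities.
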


\begin{proof}
    From~\Cref{lem:core_marginal_utility}, we only need to compute the marginal utilities of each leaf agent. And marginal utilities of each leaf agent $v_i$ can be computed just by solving for the max. wt. $b$-matching in $G$ and $G\setminus\{v_i\}$. \cite{Schrijver_book} details a strongly polynomial algorithm to compute the max. wt. $b$-matching in a graph. Combining the above, we get a strongly polynomial time algorithm to decide if an imputation is in the core of $b$-matching games on star graphs.
\end{proof}

\subsection{Unstable Coalitions under General Profit Shares}

\begin{definition}
    Given a cooperative game $G=(N,\nu)$ and a profit share $p$, an \textit{unstable coalition} $S\subseteq N$ is a set of agents who receive a profit less than their worth, i.e., $p(S)<\nu(S)$.
\end{definition}

Under this profit share, such sub-coalitions will break away from the grand coalition and hence cause instability. Given a profit share, our goal is to decide if such unstable sub-coalitions exist. The below theorem shows that the task is intractable.

\begin{theorem}
\label{thm:profit_share_for_star}
    Given a $b$-matching game on a star graph and a profit share, deciding if there exists an unstable sub-coalition is NP-complete.
\end{theorem}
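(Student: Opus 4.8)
The plan is to prove both membership in NP and NP-hardness, which together give NP-completeness. Membership is immediate: a candidate unstable sub-coalition $S$ is a polynomial-size certificate, and checking $p(S) < \nu(S)$ takes polynomial time, since $p(S)$ is a sum of given numbers and $\nu(S)$ is a maximum weight $b$-matching on a star, which (as in the greedy observation of \Cref{cl:marginal_utility}) is computed by taking the $b_u$ heaviest available edge-slots.

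For hardness I would reduce from \textsc{Subset Sum}: given positive integers $a_1,\dots,a_n$ and a target $t \le \sum_i a_i$, decide whether some $I \subseteq \{1,\dots,n\}$ satisfies $\sum_{i\in I} a_i = t$. Build a star with center $u$ and leaves $v_1,\dots,v_n$, set $b_{v_i}=a_i$, let every edge weight equal a fixed $W \ge 2$, and put $b_u = t$. Define the profit share by $p_i = (W-1)a_i$ for each leaf and $p_u = t-1$; all shares are nonnegative, so this is a valid profit share. The first step is to restrict the search space: since shares are nonnegative and any coalition omitting $u$ has worth $0$ (the star has no edges among leaves), every unstable coalition must contain $u$. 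Writing such a coalition as $S=\{u\}\cup T$ with $T\subseteq V$ and setting $C(T)=\sum_{v_i\in T} b_{v_i}$, the equal weights force $\nu(S)=W\min(t,\, C(T))$, so $S$ is unstable exactly when $f(T) := W\min(t,C(T)) - (W-1)C(T) > p_u$.

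The heart of the argument is to analyze $f$ as a function of $c=C(T)$: it equals $c$ for $c\le t$ and equals $Wt-(W-1)c$ for $c\ge t$, i.e. a ``tent'' that rises with slope $1$, attains its unique maximum value $f=t$ precisely at $c=t$, and then falls with slope $-(W-1)<0$. Consequently $\max_T f(T)=t$ iff some $T$ achieves $C(T)=t$, i.e. iff the \textsc{Subset Sum} instance is a yes-instance; otherwise every achievable $c$ is an integer $\neq t$, and then $f\le t-1$ (if $c\le t$ then $c\le t-1$ and $f=c\le t-1$; if $c\ge t+1$ then $f\le t-(W-1)\le t-1$). Since $p_u=t-1$, an unstable coalition exists iff $\max_T f(T) > t-1$, which holds iff the instance is solvable, completing the reduction.

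The main subtlety to get right is making it strictly unprofitable to overshoot the center's capacity, which is exactly what forces positive prices $p_i=(W-1)a_i$ with $W\ge 2$: if $W=1$ the prices vanish, every coalition with $C(T)\ge t$ becomes unstable regardless of the instance, and the reduction collapses. The second delicate point is the off-by-one calibration $p_u=t-1$, chosen so the decision threshold lands strictly between the solvable value $t$ and the best unsolvable value $t-1$; I would verify this separation holds over the integers for all $W \ge 2$ before concluding.
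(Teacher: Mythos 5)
Your proof is correct, but it takes a genuinely different route from the paper's. The paper reduces from the \textsc{0-1 Knapsack Problem}: it encodes the item values into \emph{non-uniform} edge weights $w_i = a_i+1$ and the item weights into leaf capacities $b_i = c_i$, sets $b_u = C$ and $p_u = A$, and calibrates the leaf prices as $p_i = c_i(a_i+1) - a_i$ so that any leaf that is not fully matched strictly hurts the coalition (\Cref{lem:fully_matched_in_star}, \Cref{cor:fully_matched}); this forces an optimal deviating coalition to satisfy the knapsack capacity constraint implicitly, after which $\nu(S)-p(S) = \sum_{i\in S\cap V} a_i - A$ (\Cref{lem:value_in_star}). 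You instead reduce from \textsc{Subset Sum} with \emph{uniform} edge weights $W$, pushing all the combinatorial content into the leaf capacities; the worth then has the closed form $\nu(\{u\}\cup T) = W\min(t, C(T))$, and your ``tent function'' $f(c) = W\min(t,c) - (W-1)c$ with its unique integer maximum at $c=t$ replaces the paper's fully-matched lemma entirely. Your version is arguably more self-contained (no structural lemma about optimal coalitions is needed, only the explicit formula and an integrality gap of $1$), while the paper's version more directly mirrors the knapsack trade-off between value and capacity. Two minor points you should make explicit: assume $t \ge 1$ so that $p_u = t-1 \ge 0$ is a valid (nonnegative) profit share, which is without loss of generality since $t=0$ is trivially a yes-instance; and note that $f$ takes integer values on achievable $c$, which is what licenses the step from ``$f \le t-1$ in the unsolvable case'' to the strict threshold $p_u = t-1$.
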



\begin{proof}


In this section, we will use the reduction from the \textsc{0-1 Knapsack Problem} which is stated as follows:
\begin{center}
\begin{tabular}{|>{\centering\arraybackslash}m{2cm}|m{10cm}|}
\hline
\multicolumn{2}{|c|}{\textsc{0-1 Knapsack Problem}} \\ \hline
\textbf{Instance:} & A set of $ n $ items $ \{1, 2, \dots, n\} $, each with a weight $ c_i \in \mathbb{N}$ and a value $ a_i \in \mathbb{N} $, and a knapsack with a maximum weight capacity $ C \in \mathbb{N} $ and a goal value $A \in \mathbb{N}$. \\ \hline
\textbf{Question:} & Is there a collection of items $ S \subseteq \{1, 2, \dots, n\} $ such that the total weight $ \sum_{i \in S} c_i \leq C $ and the total value $ \sum_{i \in S} a_i > A$ \\ \hline
\end{tabular}
\end{center}
Consider the following star graph $G^*$.
\begin{figure}[H]
    \centering
    \begin{tikzpicture}[scale=1.5, every node/.style={inner sep=1pt}]
        \small
        \node[draw, circle, minimum size=0.45cm] (u) at (0, 2) {$u$};
        \node[draw, circle, minimum size=0.45cm] (v1) at (-3, 0) {$v_1$};
        \node[draw, circle, minimum size=0.45cm] (v2) at (-1, 0) {$v_2$};
        \node[draw, circle, minimum size=0.45cm] (vn) at (2.5, 0) {$v_n$};
        
        \draw (u) -- (v1) node[midway, left] {$w_1$};
        \draw (u) -- (v2) node[midway, left] {$w_2$};
        \draw (u) -- (vn) node[midway, right] {$w_n$};
      
        \path (v2) -- (vn) node[midway, draw=none] {$\cdots$};
    \end{tikzpicture}
    \caption{Graph $G^*$ used in proof of~\Cref{thm:profit_share_for_star}}
    \label{fig:star}
\end{figure}
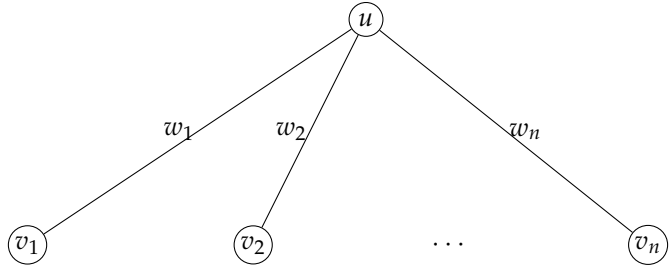







Let $G^*=(U^*,V^*,E^*)$ such that $U^*= \{u\} , V^*=\{v_1,v_2, \ldots,$ $ v_n\}, E^*= \{(u,v_1),(u,v_2), \dots, (u,v_n)\}$. Represent $w: E^* \rightarrow \mathbb{R}_+$, $b: U^*\cup V^* \rightarrow \mathbb{Z}_+$ and $p: U^*\cup V^* \rightarrow \mathbb{R}_+$ as follows.

$$b_i:= b(v_i), \quad w_i:= w((u,v_i)) ,   \quad b_u:=b(u) , \quad p_i:= p(v_i), \quad p_u:=p(u)$$

Construct the instance based on knapsack instance as follows:
$
b_i = c_i, \quad w_i = a_i + 1, \quad p_i = -a_i + c_i(a_i + 1),\quad b_u= C, \quad p_u= A.
$
Let $S \subseteq U^*\cup V^*$ be a sub-coalition then: 
\begin{lemma}
\label{lem:fully_matched_in_star}
If $ v_i \in S $ and $ v_i $ is not fully matched in a max. wt.$ b $-matching in $ S $, then:
$
\nu(S) - p(S) < \nu(S \setminus \{v_i\}) - p(S \setminus \{v_i\}).
$

\end{lemma}

\begin{proof}
Removing $ v_i $ decreases $ p $ by $ p_i $, and as $v_i$ is not fully matched it decreases $ v $ by at most $ (b_i - 1)w_i $ Thus,
$
\nu(S) - p(S) \text{ increases by at least } p_i - (b_i - 1)w_i,
$
which simplifies to $ 1 $. Hence, $ \nu(S) - p(S)$ increases with the addition of $v_i$.

\end{proof}
\begin{corollary}
\label{cor:fully_matched}
    If $S\subseteq U^*\cup V^*$ is a sub-coalition that maximizes $\nu(S) - p(S)$ then every leaf player of $S$ must be fully matched in all max. wt. $b$-matching in $S$.
\end{corollary}
 
We want to check whether all sub-coalitions $S \subseteq U^*\cup V^*$ satisfy $\nu(S) - p(S) \leq 0$. This is true if the condition holds for all sub-coalitions $S$ that maximize the value of $\nu(S) - p(S)$. \Cref{cor:fully_matched} implies that such coalitions must fully match all their leaf players in their max. wt. $b$-matchings. So let us compute the values of $p(S)$ and $\nu(S)$ for these sets.

\begin{lemma}
    \label{lem:value_in_star}
    Assume that in a max. wt. $b$-matching on a sub-coalition of agents $S$, every vertex in $S \cap V^*$ is fully matched. Then the value of this subset is $\sum_{i \in S\cap v} a_i - A$. 
\end{lemma}

\begin{proof}
All leaf vertices are fully matched so for each leaf vertex $v_i\in S$ it increases the value of $S$ by $b_iw_i$ so:
\begin{align*}
\nu(S) & = \sum_{i \in S \cap v} b_i w_i \\
\nu(S) - p(S) & = \sum_{i \in S \cap v} b_i w_i - \bigg(\sum_{i \in S \cap v} p_i + p_u \bigg) \\
 & = \sum_{i \in S \cap v} c_i (a_i + 1) - \bigg(\sum_{i \in S \cap v} (c_i (a_i + 1) - a_i) + A\bigg) \\
 & = \sum_{i \in S \cap v} a_i - A
\end{align*}
 \end{proof}
If there is a sub-coalition of agents $S$ such that $ \nu(S) - p(S) > 0 $, then we have a collection of items in \textsc{0-1 Knapsack Problem} such that the sum of their values is greater than $A$ and the sum of their weights does not exceed the capacity $C$ because the corresponding sub-coalition of this collection makes a valid $b$-matching, so it is satisfying the knapsack constraints. Conversely, any subset $ S' $ satisfying the knapsack constraints corresponds to an unstable coalition$ S $. So the reduction from \textsc{0-1 Knapsack Problem} is complete.
\end{proof}

    \section{General Bipartite Graphs}
\label{sec:general}
In this section, we will show how the NP-completeness result in star graphs results in NP-hardness of recognizing core imputations of general bipartite graphs. We will also show that this, in turn, translates to the NP-hardness of recognizing if an imputation is the leximin(or leximax) core imputation.  

\subsection{Core imputations}

\begin{theorem}
\label{thm:core_coNP_complete}
    Deciding if an imputation is in the core of $b$-matching games on general bipartite graphs is co-NP-Complete.
\end{theorem}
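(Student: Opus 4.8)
The plan is to prove co-NP-hardness; membership in co-NP is immediate, since any sub-coalition $S$ with $p(S) < \nu(S)$ is a polynomial-time checkable certificate of non-membership. For hardness I would reduce from the problem of \Cref{thm:profit_share_for_star}: given the star instance $G^*$ together with the profit share $p$ that encodes a \textsc{0-1 Knapsack Problem} with threshold $A$, I will build a general bipartite graph $G'$ and an \emph{imputation} $p'$ on $G'$ such that $G'$ has a coalition $S$ with $p'(S) < \nu(S)$ if and only if $G^*$ has an unstable sub-coalition. The subtlety is that $p$ is only a profit share: its total $A + \sum_i (c_i w_i - a_i)$ generally differs from the grand-coalition worth $\nu(G^*)$, because the star matching is capacity-limited by $b_u = C$ while the leaf shares $p_i = c_i w_i - a_i$ implicitly pay for a full match and hence over-allocate relative to $\nu(G^*)$. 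An imputation must redistribute exactly $\nu(G')$, so the reduction has to absorb this discrepancy.

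Concretely, I would keep the knapsack gadget (central $u$ of capacity $C$, item leaves $v_i$ with shares $p_i$, and $p_u = A$) and attach a \emph{balancing gadget}: additional vertices on both sides, with weights and capacities chosen so that (i) in the grand coalition the extra vertices are always fully matched and supply exactly the capacity needed to make $\nu(G')$ equal the prescribed total profit, turning $p'$ into a genuine imputation; and (ii) a coalition $\{u\}\cup T$ restricted to the item leaves still sees effective capacity exactly $C$, so that by the computation of \Cref{lem:value_in_star} it satisfies $\nu(\{u\}\cup T) - p'(\{u\}\cup T) = \sum_{i\in T} a_i - A$ and is unstable precisely when $T$ is a knapsack solution. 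The correctness argument reuses the marginal-utility reasoning behind \Cref{cor:fully_matched}: removing a leaf that is not fully matched strictly increases $\nu(S) - p'(S)$, and this step only uses that deleting a leaf deletes its incident edges, so it survives verbatim in the multi-central graph $G'$. Hence any maximizer of $\nu(S)-p'(S)$ fully matches all of its leaves, which confines the search for an unstable coalition to a structured family.

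The hardest part is two-fold and intertwined. First, the balancing gadget must raise the grand-coalition worth to meet the over-allocated total \emph{without} being under-paid itself: an under-paid disjoint component would be a spurious unstable coalition, so the extra worth must come from genuine synergy with $u$ while leaving the item-leaf analysis intact. Second, and most delicate, I must certify that \emph{no} coalition other than the intended $\{u\}\cup T$ can witness instability; in particular the new central-side vertices, when combined with item leaves, must not encode a ``secondary'' knapsack constraint that could be violated independently of the target instance. Ruling this out — showing every candidate unstable coalition collapses, via the fully-matched reduction above, to the single encoded knapsack question — is exactly where the weights, capacities, and the split of profit among the gadget vertices have to be tuned carefully. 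Finally, it is essential that $G'$ is genuinely bipartite with more than one vertex on the $U$ side, since \Cref{thm:star_core_characterization} and \Cref{lem:core_marginal_utility} show the core of a star game is captured by the per-leaf marginal-utility inequalities and is decidable in polynomial time; the hardness must therefore rely on the multi-central structure, where leaf marginal utilities fail to be submodular and the single-vertex conditions no longer suffice.
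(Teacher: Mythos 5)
Your plan is essentially the paper's proof: the paper reduces from \Cref{thm:profit_share_for_star} and realizes your ``balancing gadget'' as exactly two new vertices $x$ (joined to all $v_i$) and $y$ (joined to $u$), with weights $w_x=\sum_i p_i+1$, $w_y=p_u+1$, capacities $b_x=\sum_i b_i$, $b_y=b_u$, and profits $p_x=(b_x-1)w_x+1$, $p_y=(b_y-1)w_y+1$, so that $p$ becomes an imputation and the maximum-weight $b$-matching avoids every $(u,v_i)$ edge. The spurious-coalition analysis you flag as the delicate point is handled by the same fully-matched argument you sketch: any maximizer of $\nu(S)-p(S)$ containing $x$ or $y$ must fully match it, which forces $S$ to be the grand coalition, $\{y,u\}$, or $\{x,v_1,\dots,v_n\}$, each satisfying $p(S)=\nu(S)$, so every unstable coalition lies entirely inside the original star.
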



\begin{proof}
Clearly, the problem lies in co-NP, since a co-NP certificate for this problem would be a sub-coalition which is not satisfied under the imputation.\\ 
We will prove that deciding if an imputation is in the core of a bipartite $b$-matching game is co-NP-hard. In this regard, we use a reduction from the problem in~\Cref{thm:profit_share_for_star}. We will construct a bipartite graph $G'$ by adding 2 vertices to a star graph $G$ such that, there is an unstable sub-coalition of agents $S$ in $G'$ if and only if $S$ is an unstable coalition in $G$.


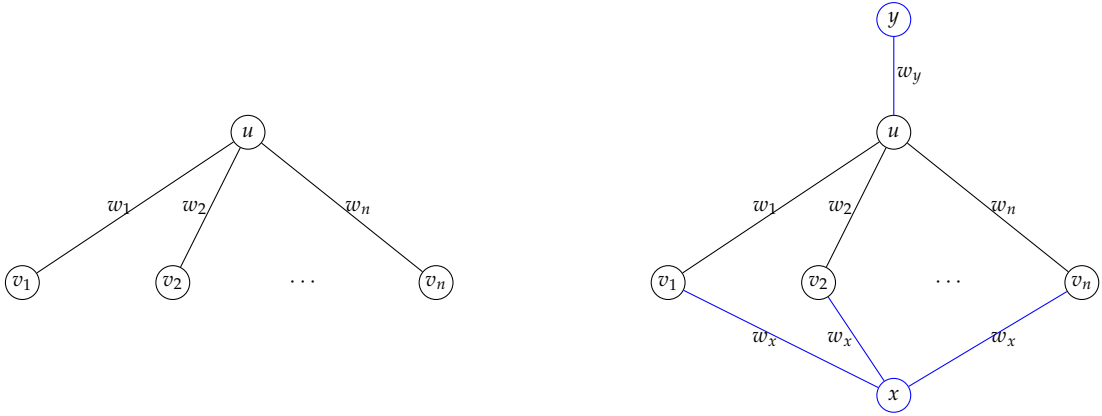
\begin{figure}[H]
    \centering
    \begin{minipage}{0.48\textwidth}
        \begin{tikzpicture}[scale=1, every node/.style={inner sep=1pt}]
            \scriptsize
            \node[draw, circle, minimum size=0.45cm] (u) at (0, 2) {$u$};
            \node[draw, circle, minimum size=0.45cm] (v1) at (-3, 0) {$v_1$};
            \node[draw, circle, minimum size=0.45cm] (v2) at (-1, 0) {$v_2$};
            \node[draw, circle, minimum size=0.45cm] (vn) at (2.5, 0) {$v_n$};

            \draw (u) -- (v1) node[midway, left] {$w_1$};
            \draw (u) -- (v2) node[midway, left] {$w_2$};
            \draw (u) -- (vn) node[midway, right] {$w_n$};
            \path (v2) -- (vn) node[midway, draw=none] {$\cdots$};
        \end{tikzpicture}
    \end{minipage}
    \hfill
    \begin{minipage}{0.48\textwidth}
        \begin{tikzpicture}[scale=1, every node/.style={inner sep=1pt}]
            \scriptsize
            \node[draw, circle, minimum size=0.45cm] (u) at (0, 2) {$u$};
            \node[draw=blue, circle, minimum size=0.45cm] (y) at (0, 3.5) {$y$};
            \node[draw, circle, minimum size=0.45cm] (v1) at (-3, 0) {$v_1$};
            \node[draw, circle, minimum size=0.45cm] (v2) at (-1, 0) {$v_2$};
            \node[draw, circle, minimum size=0.45cm] (vn) at (2.5, 0) {$v_n$};
            \node[draw=blue, circle, minimum size=0.45cm] (x) at (0, -1.5) {$x$};

            \draw (u) -- (v1) node[midway, left] {$w_1$};
            \draw (u) -- (v2) node[midway, left] {$w_2$};
            \draw (u) -- (vn) node[midway, right] {$w_n$};
            \draw[draw=blue] (u) -- (y) node[midway, right] {$w_y$};
            \draw[draw=blue] (v1) -- (x) node[midway, left] {$w_x$};
            \draw[draw=blue] (v2) -- (x) node[midway, left] {$w_x$};
            \draw[draw=blue] (vn) -- (x) node[midway, right] {$w_x$};

            \path (v2) -- (vn) node[midway, draw=none] {$\cdots$};
        \end{tikzpicture}
    \end{minipage}
    \caption{(Left) $G^*;\quad\quad$(Right) Add blue vertices and edges to $G^*$ to get $G$  }
    \label{fig:side_by_side_graph}
\end{figure}

To construct $ G=(U,V,E) $ consider same graph $G^*=(U^*,V^*,E^*)$ of~\Cref{thm:profit_share_for_star} and we add two vertices $x$ and $y$ as follows:
\begin{itemize}
    \item vertex $ x $ is connected to all $ v_i $ with edges of weight $ w_x $.
    \item vertex $ y $ is connected to $ u $ with an edge of weight $ w_y $.
\end{itemize}
So we have $G=(U,V,E)$ such that $U=U^* \cup\{x\}$, $V=V^*\cup\{y\}$, $E=E^*\cup\{(y,u),(x,v_1),$ $(x,v_2),\dots,(x,v_n)\}$

The parameters of the graph are defined as follows:
$$
w_x = \sum_{i=1}^n p_i + 1, \quad w_y = p_u + 1,
$$
$$
b_x = \sum_{i=1}^n b_i, \quad b_y = b_u,
$$
$$
p_x = (b_x - 1)w_x + 1, \quad p_y = (b_y - 1)w_y + 1.
$$

Firstly, we show that $p$ is an imputation in $G$, i.e., we need to show that $p(G) = \nu(G)$. So let us compute these values.

\begin{claim}
    $$p(G) = b_xw_x + b_yw_y$$
\end{claim}

\begin{proof}
The profit $ p(G) $ is the sum of profits assigned to all vertices:
    \begin{align*}
p(G) & = p_x + p_y + p_u + \sum_{i=1}^n p_i.\\
& = (b_x - 1)w_x + 1 + (b_y - 1)w_y + 1 + p_u + \sum_{i=1}^n p_i.\\
 &= b_x w_x - w_x + 1 + b_y w_y - w_y + 1 + p_u + \sum_{i=1}^n p_i\\
 &= b_x w_x - \left( \sum_{i=1}^n p_i + 1 \right) + 1 + b_y w_y - (p_u + 1) + 1 + p_u + \sum_{i=1}^n p_i.\\
\Rightarrow p(G) &= b_x w_x + b_y w_y.
\end{align*}
\end{proof}

\begin{claim}
    $$\nu(G') = b_xw_x+b_yw_y$$
\end{claim}
\begin{proof}
    The value $ \nu(G) $ is the max. wt. $ b $-matching in the graph $ G $. First, we show that in the max. wt. $ b $-matching of $ G $, no edge of the form $ (u,v_i) $ is used.
 
Suppose an edge $ (u,v_i) $ is included in the max. wt. $ b $-matching. We can remove this edge and instead add the edges $ (x,v_i) $ and $ (y,u) $. The value of the game then decreases by $ w_i $, from removing $ (u,v_i) $, but increases by $ w_x + w_y $, from adding $ (x,v_i) $ and $ (y,u) $). Since,
$$
w_x + w_y = \left( \sum_{i=1}^n p_i + 1 \right) + \left( p_u + 1 \right) = p(G) + 2,
$$
and $ p(G) + 2 > w_i $, this contradicts the maximality of the $ b $-matching.

Hence, no edges of the form $ (u,v_i) $ are used in the max. wt. $ b $-matching. Therefore, we can choose all edges adjacent to $x$ and $y$ upto their full capacities, without exceeding the capacities of other vertices to get the max. wt. $b$-matching. Hence, 
$
\nu(G') = b_x w_x + b_y w_y.
$
\end{proof}
\begin{corollary}
    $p(G) = \nu(G)$ so $p$ is an imputation for $G$.
\end{corollary}





\begin{lemma}
    Let $S$ be an unstable coalition containing $x$ or $y$ which are not fully matched in any max. wt. $b$-matching in $S$. Then $S\setminus \{x,y\}$ is also an unstable coalition. 
\end{lemma}

\begin{proof}
Let us consider both cases. \\
\begin{itemize}
\vspace{-6mm}
    \item[] \textbf{Case(\textit{i}): }$v = x$ : 

        Removing $ x $ decreases the profit by $ p_x $ and as $x$ is not fully matched, the value is decreased by at most $ (b_x - 1)w_x $ Thus, $\nu(S) - p(S) \text{ increases by at least } p_x - (b_x - 1)w_x$.

    \item[]\textbf{Case(\textit{ii}): $v = y$} : 
        
        By a similar argument as above we can say that by removing $v_y$, $\nu(S) - p(S)$ increases by at least $ p_y - (b_y - 1)w_y$.
        
\end{itemize}
       
    As we have $p_x - (b_x - 1)w_x = p_y - (b_y - 1)w_y = 1 $, in both cases $\nu(S \backslash\{v\}) - p(S\backslash\{v\}) > \nu(S) - p(S)$. So if $S$ is an unstable coalition that does not fully match $x$ or $y$, then $S\setminus{\{x,y\}}$ is also an unstable coalition. 
\end{proof}

\begin{observation}
    If $S$ is a sub-coalition that maximizes $\nu(S) - p(S)$ and contains $x$ or $y$, then they must be fully matched in every max. wt. $b$-matching of $S$.
\end{observation}
\begin{claim}
    No unstable coalition in $G$ contains $x$ or $y$.
\end{claim}
\begin{proof}

Consider a set $ S $ that maximizes $ \nu(S) - p(S) $. If both $ x $ and $ y $ are in $ S $, they must be fully matched, which implies that all vertices should be included in $ S $. Hence, $ S = G $, and we obtain $ \nu(S) - p(S) = 0 $.

Now, consider the case where $ y $ is in $ S $ but $ x $ is not. Since $ y $ must be fully matched, the edge $ (u,y) $ is chosen to satisfy the capacity of $ u $. Consequently, no edge $ (u,v_i) $ can be matched. Because $ x $ is not in $ S $, the presence of any $ v_i $ does not increase the value of $ S $, and thus these vertices can be removed to maximize $ \nu(S) - p(S) $. Removing all such vertices, we obtain $ S = \{y,u\} $. However, this is not an unstable coalition, as shown by the following computation:
$$p(S) = p_y + p_u = (b_y - 1)w_y + 1 + p_u = (b_u - 1)(p_u + 1) + p_u = b_up_u + b_u $$
$$\nu(S) = b_y.w_y = b_u(p_u+1) = b_up_u + b_u$$

Thus, we have $ p(S) = \nu(S) $.

Next, consider the case where $ x $ is in $ S $ but $ y $ is not. In this scenario, $ x $ must be fully matched, utilizing the entire capacity of the vertices $ v_i $. As a result, these vertices cannot be matched to $ u $, so removing $ u $ further increases $ \nu(S) - p(S) $. This leads to $ S = \{ x,v_1,v_2,\dots,v_n\} $.

$$p(S)= p_x+\sum_{i=1}^n p_i= (b_x - 1)w_x + 1 + \sum_{i=1}^n p_i = (\sum_{i=1}^nb_i - 1)(\sum_{i=1}^np_i + 1) + 1 + \sum_{i=1}^np_i = (\sum_{i=1}^nb_i)(\sum_{i=1}^np_i+1) $$
$$\nu(S)= b_xw_x = (\sum_{i=1}^nb_i)(\sum_{i=1}^np_i+1)  $$

Thus, we again obtain $ p(S) = \nu(S) $.
\end{proof}

We proved that if there is an unstable coalition in $G'$ it has to be an unstable coalition in $G$ so deciding whether imputation $p$ is in the core of $G'$ is equivalent to deciding whether the profit share $p$ is in the core of $G$ which we have proved shown is co-NP-hard. This completes the proof of \cref{thm:core_coNP_complete}.
\end{proof}

\subsection{Fair core imputations}
Using the NP-hardness result above, we will show that finding certain fair core imputations is NP-hard. We will look at finding imputations that maximize the minimum profit or minimize the maximum profit of agents in the $b$-matching game.

\begin{theorem}
    Finding a core imputation that maximizes the minimum profit-share of any vertex in a $b$-matching game is NP-hard. Similarly, it is NP-hard to find a core imputation that minimizes the maximum prove-share of any agent.
\end{theorem}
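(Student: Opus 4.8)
The plan is to reduce from the core-membership problem of \Cref{thm:core_coNP_complete}, which we have shown is co-NP-hard (equivalently, from the \textsc{0-1 Knapsack} instance underlying it). The single observation driving the reduction is that the maximin objective only ``sees'' the uniform profit level: for every imputation $q$ of a game $(N,\nu)$ we have $\min_{i\in N} q_i \le \frac{1}{|N|}\sum_{i\in N} q_i = \frac{\nu(N)}{|N|}$, with equality if and only if $q$ is the \emph{flat} imputation $p^{\flat}$ defined by $p^{\flat}_i = \nu(N)/|N|$ for all $i$. Hence the optimum of ``maximize the minimum share over the core'' equals $\nu(N)/|N|$ exactly when $p^{\flat}$ lies in the core, and is strictly smaller otherwise; in the former case $p^{\flat}$ is the unique optimizer (indeed it is the leximin core imputation, being perfectly balanced). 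The same dichotomy holds for ``minimize the maximum share'', since dually $\max_{i} q_i \ge \nu(N)/|N|$ with equality only at $p^{\flat}$. Thus a polynomial-time algorithm for either fairness objective lets us test, in polynomial time, whether $p^{\flat}$ is in the core: compute the optimal value and compare it, by an exact rational comparison, with $\nu(N)/|N|$.

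It therefore suffices to produce, from a \textsc{Knapsack} instance, a $b$-matching game whose \emph{flat} imputation lies in the core if and only if the instance is a ``no'' instance. Note that $p^{\flat}$ is in the core precisely when the grand coalition maximizes the density $\nu(S)/|S|$, i.e. $\nu(S)\le \frac{\nu(N)}{|N|}\,|S|$ for all $S$. First I would revisit the construction in the proof of \Cref{thm:core_coNP_complete}, where an explicit (highly non-uniform) imputation $p$ on a graph $G$ satisfies ``$p$ is in the core'' iff $G^{*}$ has no unstable coalition iff \textsc{Knapsack} answers ``no''. I would then augment $G$ with gadget agents --- private partners attached to the original vertices by dedicated edges, together with isolated balancing edges --- whose weights and capacities are calibrated so that (i) the intended common share $\nu(N)/|N|$ is realised, and (ii) the weighted per-coalition deficit $\nu(S)-p(S)$ of the old construction is reproduced by the count-weighted deficit $\nu(S)-\frac{\nu(N)}{|N|}|S|$ of the new one, so that unstable coalitions survive the flattening and no new ones appear.

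With such a flattened instance in hand the reduction closes immediately: the maximin value equals $\nu(N)/|N|$ iff no coalition has larger density iff there is no unstable coalition iff \textsc{Knapsack} answers ``no'', so computing a maximin (respectively minimax) core imputation decides an NP-complete problem, giving NP-hardness of both tasks.

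I expect the flattening step to be the crux. The hardness of \Cref{thm:core_coNP_complete} is carried by an imputation with widely varying shares ($p_x,p_y$ are large while $p_u=A$), whereas the maximin value is sensitive only to the single uniform level $\nu(N)/|N|$; the gadget must convert the \emph{weighted} deficit $\nu(S)-p(S)$ into the \emph{count-weighted} deficit $\nu(S)-\frac{\nu(N)}{|N|}|S|$ without disturbing which coalitions are unstable, and must keep the instance of polynomial size. The supporting bookkeeping --- analogues of \Cref{lem:fully_matched_in_star} certifying that a deficit-maximizing coalition fully matches its leaves, together with a recomputation of $\nu$ and $p$ on the enlarged graph --- is routine once the gadget is fixed, but making the gadget simultaneously preserve the Knapsack encoding and realise a genuinely flat imputation is where the work lies.
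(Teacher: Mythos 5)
Your proposal is correct and follows essentially the same route as the paper: reduce to deciding whether the \emph{flat} imputation lies in the core, by attaching to each vertex $v$ a private partner $v'$ via a dedicated heavy edge that absorbs the non-uniform share and brings every agent to a common level while preserving each coalition's deficit $\nu(S)-p(S)$ exactly. The calibration you flag as the crux is realized in the paper by setting $w'_{(v,v')}=2p^*-p_v$ with $p^*=1+\max\{\max_v p_v,\max_e w_e\}$, $b'_v=b_v+1$, $b'_{v'}=1$, and $p'\equiv p^*$; your averaging argument ($\min_i q_i\le \nu(N)/|N|$ with equality only for the flat imputation) is a cleaner justification of the final step than the paper's.
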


\begin{proof}
    Consider the game used in \Cref{thm:core_coNP_complete}. It gives a $b$-matching game on graph $G=(U, V, E), w: E\rightarrow \mathbb{R}_+, b:U\cup V\rightarrow \mathbb{Z}_+$ and an imputation $p$ such that, there exists an unstable coalition $S$, if and only if the corresponding items provide a valid knapsack solution.  

    To prove that finding a core imputation that maximizes the minimum profit in a $b$-matching game is NP-hard, we will construct a new graph, $G'$, and provide a new imputation, $p'$, based on $G$ and $p$ such that the $p'$ will give equal profits to all the agents and that $p'$ is in the core of the $G'$ if and only if $p$ is in the core of $G$. Since $p'$ gives equal profit to all agents, if it is in the core, it maximizes the minimum profit of all agents among core imputations. This will imply that recognizing if an imputation is maximizing the minimum profit in a $b$-matching game is co-NP-hard. And as the problem of recognition is co-NP-hard, computation of such a core imputation is NP-hard.
\begin{figure}[H]
    \centering
    \begin{minipage}{0.48\textwidth}
        \begin{tikzpicture}[scale=1, every node/.style={inner sep=1pt}]
            \scriptsize
            \node[draw, circle, minimum size=0.45cm] (u) at (0, 2) {$u$};
            \node[draw=blue, circle, minimum size=0.45cm] (y) at (0, 3.5) {$y$};
            \node[draw, circle, minimum size=0.45cm] (v1) at (-3, 0) {$v_1$};
            \node[draw, circle, minimum size=0.45cm] (v2) at (-1, 0) {$v_2$};
            \node[draw, circle, minimum size=0.45cm] (vn) at (2.5, 0) {$v_n$};
            \node[draw=blue, circle, minimum size=0.45cm] (x) at (0, -1.5) {$x$};

            \draw (u) -- (v1) node[midway, left] {$w_1$};
            \draw (u) -- (v2) node[midway, left] {$w_2$};
            \draw (u) -- (vn) node[midway, right] {$w_n$};
            \draw[draw=blue] (u) -- (y) node[midway, right] {$w_y$};
            \draw[draw=blue] (v1) -- (x) node[midway, left] {$w_x$};
            \draw[draw=blue] (v2) -- (x) node[midway, left] {$w_x$};
            \draw[draw=blue] (vn) -- (x) node[midway, right] {$w_x$};

            \path (v2) -- (vn) node[midway, draw=none] {$\cdots$};
        \end{tikzpicture}
    \end{minipage}
    \hfill
    \begin{minipage}{0.48\textwidth}
        \begin{tikzpicture}[scale=1, every node/.style={inner sep=1pt}]
            \scriptsize
            \node[draw, circle, minimum size=0.45cm] (u) at (0, 2) {$u$};
            \node[draw=blue, circle, minimum size=0.45cm] (y) at (0, 3.5) {$y$};
            \node[draw, circle, minimum size=0.45cm] (v1) at (-3, 0) {$v_1$};
            \node[draw, circle, minimum size=0.45cm] (v2) at (-1, 0) {$v_2$};
            \node[draw, circle, minimum size=0.45cm] (vn) at (2.5, 0) {$v_n$};
            \node[draw=blue, circle, minimum size=0.45cm] (x) at (0, -1.5) {$x$};

            \node[draw=red, circle, minimum size=0.45cm] (u') at (-1, 2) {$u'$};
            \node[draw=red, circle, minimum size=0.45cm] (y') at (-1, 3.5) {$y'$};
            \node[draw=red, circle, minimum size=0.45cm] (v'1) at (-4, 0) {$v'_1$};
            \node[draw=red, circle, minimum size=0.45cm] (v'2) at (-2, 0) {$v'_2$};
            \node[draw=red, circle, minimum size=0.45cm] (v'n) at (1.5, 0) {$v'_n$};
            \node[draw=red, circle, minimum size=0.45cm] (x') at (-1, -1.5) {$x'$};

            \draw (u) -- (v1) node[midway, left] {$w_1$};
            \draw (u) -- (v2) node[midway, left] {$w_2$};
            \draw (u) -- (vn) node[midway, right] {$w_n$};
            \draw[draw=blue] (u) -- (y) node[midway, right] {$w_y$};
            \draw[draw=blue] (v1) -- (x) node[midway, left] {$w_x$};
            \draw[draw=blue] (v2) -- (x) node[midway, left] {$w_x$};
            \draw[draw=blue] (vn) -- (x) node[midway, right] {$w_x$};

            \draw[draw=red] (y) -- (y') node[midway, above] {$w'_y$};
            \draw[draw=red] (u) -- (u') node[midway, above] {$w'_u$};
            \draw[draw=red] (v1) -- (v'1) node[midway, above] {$w'_{v_1}$};
            \draw[draw=red] (v2) -- (v'2) node[midway, above] {$w'_{v_2}$};
            \draw[draw=red] (vn) -- (v'n) node[midway, above] {$w'_{v_n}$};
            \draw[draw=red] (x) -- (x') node[midway, below] {$w'_x$};

            \path (v2) -- (vn) node[midway, draw=none] {$\cdots$};
        \end{tikzpicture}
    \end{minipage}
    \caption{(Left) $G;\quad$(Right) We add partner vertices, shown in red, for every vertex in $G$ to get $G'$}
    \label{fig:extended_graph}
\end{figure}
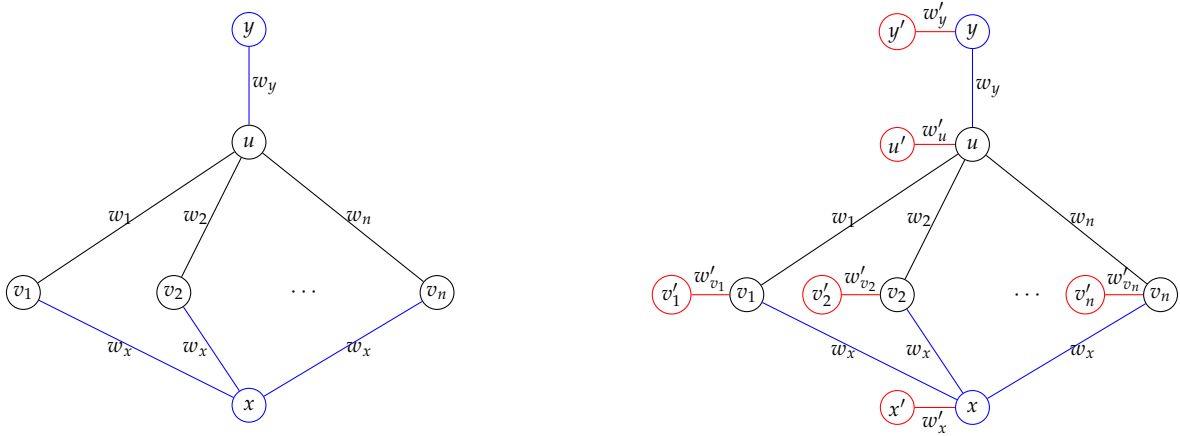

    We will construct $\{G',w',b'\}$ from $\{G,w,b\}$. Let $\nu$ and $\nu'$ represent the characteristic functions of the $b$-matching games in $G$ and $G'$ respectively.
    Construct the graph $G'=(U', V', E')$ from $G=(U, V, E)$ as follows. For every vertex $v$ in $U\cup V$, there are two vertices - $v, v'$ - in $U'\cup V'$. Call $v$'s as the \textit{original} vertices and $v'$'s, the \textit{partner} vertices of respective original vertex $v$.
    
    Preserve all the edges from $G$ in $G'$, i.e., if $(u,v)\in E$, then $(u,v)\in E'$. Weights of all such edges will remain the same. Add an edge between every vertex and its partner, i.e., $\forall v\in U\cup V, (v,v')\in E'$. The weights of these edges will be set based on the profits of agents in $p$ and weights of edges in $E$ as follows. Let $$p^* = 1+ \max{\{\max_{v\in U\cup V} p_v, \max_{e \in E} w_e \}}$$Then $$\forall v \in U\cup V,\quad w'_{(v,v')} = 2p^* - p_v$$ The capacities, $b':U'\cup V'\rightarrow \mathbb{Z}_+$ of all original vertices will be set one more than their capacities in $G$ and the capacities of partner vertices will be set to 1, i.e., $b'_v = 1 + b_v$ and $b'_{v'} = 1$.

    Note that $p^*$ is chosen so that for every vertex, the edge to its partner is heavier than all other edges incident on it - $$ w'_{(v,v')} = 2p^* - p_v = (p^* - p_v) + (p^*) > p^* > w_e,\forall {e \in E}$$ And since we have increased the capacities of all original vertices by 1 and set the capacities of all partner vertices to 1, this, in effect, ensures the following.

    \begin{claim}
    \label{cl:max_wt_b_matching}
        The max. wt. $b$-matching of $G'$ is some max. wt. $b$-matching in $G$ together with all edges between original vertices and their partners. 
    \end{claim}

    \begin{proof}
        Given any $b$-matching, not containing $(v,v')$ for some $v\in U\cup V$, we can include $(v,v')$, replacing any other edge incident on $v$ if necessary, to get another $b$-matching with a larger weight. Hence, every edge between an original vertex and its partner is chosen exactly once in the max. wt. $b$-matching in $G'$. Given these edges are chosen, the capacities on original vertices will become the same as in $G$, completing the proof.  
    \end{proof}

    Finally, consider the profit share $$\forall v\in U'\cup V', \quad p'(v) = p^*$$
    \begin{lemma}
        $p':U'\cup V'\rightarrow \mathbb{R}_+$ is an imputation in the $b$-matching game on $G'$.
    \end{lemma}

    \begin{proof}
        The total profit distributed by $p' = \sum_{v\in U'\cup V'} p^* = 2|(U'\cup V')|\cdot p^*$
        From \Cref{cl:max_wt_b_matching}, the total worth of the game is $\nu(G') = \nu(G)+\sum_{v\in U\cup V}(2p^*-p_v) = \nu(G) + 2|(U\cup V)|\cdot p^* - \sum_{v\in U\cup V} p_v$. Since $p:U\cup V\rightarrow \mathbb{R}_+$ is an imputation in $G$, $\sum_{v\in U\cup V} p_v = p(G) = \nu(G)$, and $\nu(G') = 2|U\cup V|\cdot p^* = p'(G')$. Hence, $p'$ is an imputation in the $b$-matching game on $G'$.
    \end{proof}

    Now we prove the main lemma. 

    \begin{lemma}
    \label{lem:leximin_reduction}
        $p'$ is not in the core of $(G',\nu')$ if and only if $p$ is not in the core of $(G,\nu)$.
    \end{lemma}

    \begin{proof}
        Assume $p$ is not in the core of $(G,\nu)$ as $p(S)<\nu(S)$ for some set $S\subseteq U\cup V$. Let $S'$ be the union of all original vertices and their partner vertices of every vertex in $S$. Then, like in the proof of \Cref{cl:max_wt_b_matching}, $$\nu'(S') = \nu(S) + \sum_{v\in S}(2p^*-p_v) = \nu(S) + 2|S|\cdot p^* - \sum_{v\in S}(p_v) = \nu(S) + 2|S|\cdot p^* - p(S)$$Also note that $p'(S') = 2|S|\cdot p^*$ as every vertex $v$ in $S'$ gets the same profit of $p^*$. Therefore $p(S) < \nu(S) \implies p'(S') < \nu'(S')$. This shows that if $p$ is not in the core of $(G,\nu)$ then $p'$ is not in the core of $(G',\nu')$.

        Now assume $p'$ is not in the core of $(G',\nu')$. Let $S^*\subseteq U'\cup V'$ be a \textit{minimal} set such that $p'(S^*)<\nu'(S^*)$. Note that, for every partner vertex $v'\in S^*$, $S^*$ must also contain the original vertex $v$. For otherwise, we can remove $v'$ from $S^*$ decreasing the profit but not the worth, contradicting the minimality of $S^*$.        
        
        Now, if $S^*$ contains an original vertex $v$ but not its partner $v'$, we can add $v'$ to $S^*$ and still maintain $p'(S^*\cup \{v'\})<\nu'(S^* \cup \{v'\})$ as this increases the value($2p^*-p_v$) more than the profit($p^*$). performing this operation for all original vertices, we will end up with a set $S'$ which has a vertex and its partners in pairs, and $p'(S') < \nu'(S')$.

        Now consider the sub-coalition of agents $S$ of just the original partner vertices in $S'$, in the game $(G,\nu)$. Note that, like above, $p'(S') = 2|S|\cdot p^*$ and $\nu'(S') = \nu(S) + \sum_{v\in S}(2p^*-p_v) = \nu(S) + 2|S|\cdot p^* - \sum_{v\in S}(p_v) = \nu(S) + 2|S|\cdot p^* - p(S)$.

        And so, if $p'(S') < \nu'(S')$ then $p(S) < \nu(S)$ proving the other direction of the lemma.
        
    \end{proof}

    Combining \Cref{lem:leximin_reduction} and \Cref{thm:core_coNP_complete}, we get that finding an unstable coalition in $(G',\nu')$ under imputation $p'$ is NP-hard. Since $p'$ is the imputation that distributes equal profit to all the agents, this proves that it is NP-hard to compute an imputation that maximizes the minimum profit or minimizes the maximum profit. 

\end{proof}

Since leximin and leximax imputations also maximize the minimum profit and minimize the maximum profit, it is also NP-hard to  compute them.

\begin{corollary}
\label{cor:leximin_NP_hard}
    Computing the leximin(or leximax) core imputation in a $b$-matching game is NP-hard.
\end{corollary}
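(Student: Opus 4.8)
The plan is to derive \Cref{cor:leximin_NP_hard} directly from the theorem it follows, namely the statement that computing a core imputation maximizing the minimum profit-share (or minimizing the maximum profit-share) is NP-hard. The key observation is that the leximin and leximax core imputations are refinements of these coarser fairness objectives: the leximin imputation, by definition, first maximizes the smallest profit-share among all core imputations, and only then breaks ties by optimizing subsequent entries of the sorted profit vector. Thus any leximin core imputation necessarily also maximizes the minimum profit-share over the core, and symmetrically any leximax core imputation minimizes the maximum profit-share.

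First I would make this inclusion precise. Let $P$ denote the set of core imputations of the given $b$-matching game. By the definition of leximin in the preliminaries, the leximin imputation $p^{\mathrm{lex}} \in P$ is lexicographically largest among all imputations in $P$ when shares are sorted in ascending order; in particular its smallest coordinate is at least the smallest coordinate of every other core imputation, so $\min_{v} p^{\mathrm{lex}}(v) = \max_{p \in P} \min_{v} p(v)$. This is exactly the quantity that the maximin problem seeks. Consequently, the value of the maximin objective over the core equals the smallest share under the leximin imputation, and the leximin imputation itself is one specific optimizer of the maximin problem. The leximax/maximum-share case is entirely symmetric, using the descending-sort definition.

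Given this, I would complete the argument by reduction: if one could compute a leximin core imputation in polynomial time, then one could read off $\max_{p \in P}\min_v p(v)$ in polynomial time simply by taking the minimum coordinate of the returned imputation. In the construction of the preceding theorem, the equal-share imputation $p'$ (assigning $p^*$ to every agent) is in the core precisely when $p$ is in the core of $G$; when $p'$ is in the core it is the unique maximizer of the minimum share, and when it is not, the maximin value is strictly below $p^*$. Hence the maximin \emph{value} decides whether an unstable coalition exists, which was shown NP-hard. Since computing the leximin imputation yields this value, computing the leximin (and symmetrically the leximax) core imputation is NP-hard as well.

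I do not anticipate a serious obstacle here, since this is a short deductive step rather than a new construction; the only point requiring care is the bookkeeping of whether we are arguing about \emph{recognizing} versus \emph{computing} the fair imputation. The cleanest route, which I would follow, is to phrase the whole thing as a Turing reduction from the maximin computation problem (already established NP-hard) to the leximin computation problem, observing that extracting the minimum coordinate of the output is trivially polynomial. This avoids any subtlety about tie-breaking within the core and relies only on the definitional fact that leximin optimizers are maximin optimizers.
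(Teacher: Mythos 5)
Your proposal is correct and follows essentially the same route as the paper, which simply observes that a leximin (resp.\ leximax) core imputation is in particular a maximin (resp.\ minimax) optimizer over the core, so the NP-hardness of the preceding theorem transfers directly. Your additional care in phrasing this as a Turing reduction and in noting that the minimum coordinate of the leximin output recovers the maximin value is a harmless elaboration of the same one-line argument.
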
    




    
    \bibliographystyle{alpha}
    \bibliography{refs}
    
    
\end{document}